\documentclass[a4paper]{article}
\usepackage[american]{babel}
\usepackage{amsmath,amsthm,amsfonts}
\usepackage{graphicx}
\usepackage{paralist}

\newtheorem{theorem}{Theorem}[section]

\newcommand{\f}{\mathbf{f}}
\newcommand{\g}{\mathbf{g}}
\newcommand{\norm}[2]{\|#2\|_{#1}}
\newcommand{\qmax}{q_\textup{max}}
\newcommand{\R}{\mathbb{R}}
\newcommand{\rhomax}{\rho_\textup{max}}

\newcommand{\unit}[1]{\textup{#1}}
\newcommand{\vmax}{V_\textup{max}}

\graphicspath{{./}{pictures/}}

\title{Fundamental diagrams for kinetic equations of traffic flow}
\author{	Luisa Fermo \\
		{\small\it Department of Mathematics and Computer Science}\\[-1mm]
		{\small\it University of Cagliari}\\[-1mm]
		{\small\it Viale Merello 92, 09123 Cagliari, Italy}\\[5mm]
		Andrea Tosin\thanks{Partially supported by the European Commission under the 7th Framework Program (grant No. 257462 HYCON2 Network of Excellence)
			and by the Google Research Award ``Multipopulation Models for Vehicular Traffic and Pedestrians'', 2012--2013.} \\
		{\small\it Istituto per le Applicazioni del Calcolo ``M. Picone''}\\[-1mm]
		{\small\it Consiglio Nazionale delle Ricerche}\\[-1mm]
		{\small\it Via dei Taurini 19, 00185 Roma, Italy}
		}
\date{}

\begin{document}
\maketitle

\begin{abstract}
In this paper we investigate the ability of some recently introduced discrete kinetic models of vehicular traffic to catch, in their large time behavior, typical features of theoretical fundamental diagrams. Specifically, we address the so-called ``spatially homogeneous problem'' and, in the representative case of an exploratory model, we study the qualitative properties of its solutions for a generic number of discrete microstates. This includes, in particular, asymptotic trends and equilibria, whence fundamental diagrams originate.

\medskip

\noindent{\bf Keywords:} Traffic flow, discrete kinetic models, stochastic games, asymptotic trends, fundamental diagrams.

\medskip

\noindent{\bf Mathematics Subject Classification:} Primary: 90B20; Secondary: 34A34, 34D05.
\end{abstract}

\section{Quick overview of discrete kinetic models}
\label{sect:intro}
In the research about vehicular traffic, the \emph{fundamental diagram} is a relationship linking the flux $q$ of vehicles to their density $\rho$ in steady flow conditions. In particular, the density, expressing the number of vehicles per kilometer of road, is implicitly supposed to be uniformly distributed in space, so that a single value can be associated to the whole road. Another relationship is the \emph{speed diagram}, which links the mean speed $u$ of cars to their density and is derived from the fundamental diagram according to the equation:
\begin{equation*}
	u=\frac{q}{\rho}. 
\end{equation*}

Fundamental and speed diagrams, collectively referred to also as \emph{fundamental relationships}, provide a synthetic macroscopic description of traffic trends at equilibrium. Generally speaking, the mapping $\rho\mapsto u=u(\rho)$ is non-increasing from a maximum value, attained for $\rho\to 0^+$, to $u=0$, attained for $\rho=\rhomax>0$, the latter being the maximum density of cars that can be accommodated on the road. Consequently, the flux $q$ vanishes for both $\rho\to 0^+$ and $\rho=\rhomax$. Moreover, the mapping $\rho\mapsto q=q(\rho)$ normally features a single global maximum $\qmax>0$, called \emph{road capacity}, which is attained for a critical density value $\sigma$ strictly comprised between $0$ and $\rhomax$. The flow regime for $0\leq\rho\leq\sigma$ is called \emph{free}, whereas that for $\sigma<\rho\leq\rhomax$ is called \emph{congested}.

Fundamental relationships are either measured experimentally, cf. e.g.,~\cite{bonzani2003MCM,kerner2004BOOK}, or expressed analytically, cf. e.g.,~\cite{garavello2006BOOK,piccoli2009ENCYCLOPEDIA}. See also~\cite{li2011TRR}, where a systematic study is developed aiming at an analytical characterization of fundamental diagrams out of experimental data. Analytical relationships are mostly used in the mathematical theory of vehicular traffic for closing \emph{first order macroscopic models}, i.e., fluid dynamic models based on the continuity equation:
\begin{equation}
	\frac{\partial\rho}{\partial t}+\frac{\partial q}{\partial x}=0,
	\label{eq:continuity}
\end{equation}
$x$, $t$ being the independent variables denoting space and time, respectively, which expresses the conservation of cars seen as a flowing continuum. Indeed an analytical fundamental diagram $q=q(\rho)$, or alternatively a speed diagram $u=u(\rho)$, enables one to transform Eq.~\eqref{eq:continuity} in a self-consistent equation for the density $\rho$:
\begin{equation}
	\frac{\partial\rho}{\partial t}+\frac{\partial}{\partial x}q(\rho)=0
	\qquad \textup{or, equivalently,} \qquad
	\frac{\partial\rho}{\partial t}+\frac{\partial}{\partial x}(\rho u(\rho))=0.
	\label{eq:first.order}
\end{equation}
Such a closure contains implicitly a conceptual approximation. In fact, as said before, fundamental relationships are meaningful in principle only in uniform steady flow conditions. Plugging them into Eq.~\eqref{eq:continuity}, which is expected to describe the evolution of the car density also far from equilibrium, requires the further (partly arbitrary) assumption that equilibrium conditions always hold at least locally in time and space, so that either relationship $q(t,\,x)=q(\rho(t,\,x))$, $u(t,\,x)=u(\rho(t,\,x))$ be admissible for every $x$ and $t$.

Besides the aspects just set forth, there is another questionable point concerning the use of fundamental relationships as constitutive laws in traffic flow models. In real traffic, fundamental relationships are a consequence, \emph{not} a cause, of vehicle dynamics. In other words, cars do not move because of the ``law of fundamental diagram'' as e.g., falling objects instead do because of gravitational laws. Vehicle dynamics are rather triggered by one-to-one, or at most one-to-few, interactions among cars, which take place at a lower microscopic scale and then generate collective trends visible at a larger macroscopic scale. In addition, microscopic dynamics are not even fully driven by classical mechanical rules, because the presence of drivers induces subjective decisions which should be modeled by duly taking into account their intrinsic level of randomness.

A possible way of addressing these issues is to move to a scale of representation of traffic flow different from the macroscopic one. However, aiming ultimately still at the description of collective trends, which are also more amenable to mathematical analysis than individual behaviors of cars, we refrain from arriving at the detail of the microscopic scale and settle instead at the \emph{mesoscopic} (or \emph{kinetic}) level. Particularly, we refer to some recently introduced models~\cite{bellouquid2012M3AS,delitala2007M3AS,fermo2013SIAP} in which the space of microstates of cars, namely the spatial position $x$ and the speed $v$, is partly or fully discrete in order to incorporate in the kinetic representation the intrinsic microscopic \emph{granularity} of the vehicle distribution.

Basically, the discrete kinetic representation of vehicular traffic consists in selecting a certain number, say $n$, of microscopic \emph{speed classes}:
\begin{equation*}
	0=v_1,\, v_2,\, v_3,\,\dots,\,v_n=\vmax>0, \qquad v_j<v_{j+1}\quad\forall\,j=1,\,\dots,\,n-1
\end{equation*}
and then in identifying vehicles traveling in the $j$-th speed class by means of their \emph{statistical distribution function} $f_j=f_j(t,\,x)$. More precisely, by definition $f_j$ is such that $f_j(t,\,x)dx$ is the infinitesimal number of vehicles that at time $t$ are comprised between $x$ and $x+dx$ and travel at speed $v_j$. The \emph{density} $\rho=\rho(t,\,x)$ of cars in the point $x$ of the road at time $t$ is obtained by summing the distributions functions over all speed classes:
\begin{equation}
	\rho(t,\,x)=\sum_{j=1}^{n}f_j(t,\,x).
	\label{eq:rho}
\end{equation}
This quantity compares directly with the macroscopic density used in fluid dynamic models~\eqref{eq:first.order}. In addition, the macroscopic flux $q$ and the mean speed $u$ can be computed according to their original definition, i.e., as statistics over the microscopic speeds:
\begin{equation}
	q(t,\,x)=\sum_{j=1}^{n}v_jf_j(t,\,x), \qquad
	u(t,\,x)=\frac{q(t,\,x)}{\rho(t,\,x)}=\frac{1}{\rho(t,\,x)}\sum_{j=1}^{n}v_jf_j(t,\,x).
	\label{eq:q.u}
\end{equation}
The same definitions hold obviously also in uniform steady conditions, when the $f_j$'s are constant in both $x$ and $t$. Hence fundamental relationships $q=q(\rho)$, $u=u(\rho)$ can be genuinely obtained statistically at equilibrium rather than being empirically postulated \emph{a priori}.

The distribution functions $f_j$ are found as the solution to a system of $n$ partial differential equations obtained by balancing, in the space of microstates, the numbers of cars which gain and lose, respectively, a given test state $(x,\,v_j)$ in the unit time because of reciprocal interactions:
\begin{equation}
	\frac{\partial f_j}{\partial t}+v_j\frac{\partial f_j}{\partial x}=G_j[\f,\,\f]-f_jL_j[\f],
	\label{eq:kinetic.continuous.space}
\end{equation}
where $\f=(f_1,\,\dots,\,f_n)$. The left-hand side is a classical transport operator, whereas the right-hand side is the difference between the $j$-th \emph{gain} $G_j$ and \emph{loss} $L_j$ \emph{operators}. The latter are bilinear and linear operators, respectively, on $\f$, which encode \emph{stochastic} microscopic interaction dynamics among cars. For instance, in the models proposed in~\cite{bellouquid2012M3AS,delitala2007M3AS} they take the following forms:
\begin{align}
	\begin{aligned}[c]
		& G_j[\f,\,\f](t,\,x)=\sum_{h,\,k=1}^{m}\int_{x}^{x+\xi}\eta_{hk}(t,\,y)A_{hk}^j(t,\,y)f_h(t,\,x)f_k(t,\,y)w(y-x)\,dy \\
		& L_j[\f](t,\,x)=\sum_{k=1}^{m}\int_{x}^{x+\xi}\eta_{jk}(t,\,y)f_k(t,\,y)w(y-x)\,dy,
	\end{aligned}
	\label{eq:G-L.continuous.space}
\end{align}
where: $\eta_{hk}$ is the frequency of interaction between any two vehicles traveling at speeds $v_h$, $v_k$ (\emph{interaction rate}); $0\leq A_{hk}^j\leq 1$ is the probability that a vehicle traveling at speed $v_h$ (\emph{candidate} vehicle) changes its speed to $v_j$ (\emph{test} speed) when interacting with a vehicle traveling at speed $v_k$ (\emph{field} vehicle); $\xi>0$ is a length over which \emph{nonlocal} interactions between candidate and field vehicles are effective (\emph{interaction length}); and finally $w$ is a function with unit integral on $[0,\,\xi]$ weighting the interactions on the basis of the distance between the interacting vehicles.

The collection of the transition probabilities $\{A_{hk}^{j}\}_{h,\,k,\,j=1}^{m}$ is called the \emph{table of games}. The coefficients $A_{hk}^{j}$ are required to satisfy pointwise in time and space the following probability distribution property:
\begin{equation}
	\sum_{j=1}^{n}A_{hk}^{j}=1, \qquad \forall\,h,\,k=1,\,\dots,\,n,
	\label{eq:table.games.sum.1}
\end{equation}
which actually implies the conservation of cars in model~\eqref{eq:kinetic.continuous.space}. In fact, summing over $j$ both sides of Eq.~\eqref{eq:kinetic.continuous.space} and recalling the definitions~\eqref{eq:rho}-\eqref{eq:q.u} one obtains then the continuity equation~\eqref{eq:continuity}.

It is worth pointing out that the description of microscopic car interactions implemented in Eq.~\eqref{eq:G-L.continuous.space} is inspired by the \emph{stochastic game theory} for including the aforesaid randomness of driver behaviors. More precisely, vehicle interactions are regarded as games that two players (viz. drivers) play using their respective pre-interaction speeds $v_h$, $v_k$ as game strategies. The payoff of the game is the new speed class $v_j$ that the candidate driver shifts to after the interaction with the field driver. However, these games are stochastic since for each pair of game strategies the payoff is known only in probability due to the intrinsic uncertainty in forecasting human behaviors. Specifically, such a probability is the transition probability $A_{hk}^{j}=\mathbb{P}(v_h\to v_j\vert v_k)$, which can also depend on time and space, as shown in Eq.~\eqref{eq:G-L.continuous.space}, because in many models it is parameterized by the vehicle density $\rho$. This allows models to account for possible changes in the transition probabilities due to the evolution of the collective state of the system. In summary, modeling car interactions as stochastic games means describing, in probability, how candidate vehicles can get the test state and how the test vehicle can lose it because of the presence of field vehicles.

Still at a mesoscopic level, the model proposed in~\cite{fermo2013SIAP} pushes forward the discretization of the space of microstates by discretizing also the space variable. This is done in order to consider also the granularity of the space distribution of cars along a road. Hence the spatial domain is partitioned in pairwise disjoint cells $I_i$ and the statistical description of the vehicle distribution relies now on a collection of double-indexed distribution functions $f_{ij}=f_{ij}(t)$, each representing the number of cars which at time $t$ are in the $i$-th spatial cell and travel at speed $v_j$. Summing over $j$ for fixed $i$ gives the density $\rho_i=\rho_i(t)$ of cars in the $i$-th cell at time $t$. Notice that the $f_{ij}$'s, thus also $\rho_i$, are constant in $I_i$. In fact the modeled spatial granularity of traffic does not allow for a finer statistical detection of the positions of cars within a given cell at an ensemble level.

The $f_{ij}$'s are found as the solution to the following system of ordinary differential equations in time:
\begin{equation}
	\frac{df_{ij}}{dt}+\frac{v_j}{\ell}(\Phi_{i,i+1}f_{ij}-\Phi_{i-1,i}f_{i-1,j})=G_{ij}[\f,\,\f]-f_{ij}L_{ij}[\f],
	\label{eq:kinetic.discrete.space}
\end{equation}
where $\f=\{f_{ij}\}_{i,\,j}$, $\ell>0$ is the characteristic length of every space cell, and $\Phi_{i,i+1}$ is the \emph{flux limiter} at the interface between two adjacent cells, which regulates the percentage of cars that can actually flow from the $i$-th to the $(i+1)$-th cell on the basis of the number of incoming cars and the free room available to them in the destination cell. Notice that this term has no counterpart in the continuous-in-space model~\eqref{eq:kinetic.continuous.space}, as it is a genuine consequence of the inclusion of space granularity into the mesoscopic mathematical description. The structure of the gain and loss operators at the right-hand side is as follows:
\begin{align}
	\begin{aligned}[c]
		& G_{ij}[\f,\,\f](t)=\frac{\ell}{2}\sum_{h,\,k=1}^{n}\eta_{hk}(t,\,i)A_{hk}^{j}(t,\,i)f_{ih}(t)f_{ik}(t) \\
		& L_{ij}[\f](t)=\frac{\ell}{2}\sum_{k=1}^{n}\eta_{jk}(t,\,i)f_{ik}(t).
	\end{aligned}
	\label{eq:G-L.discrete.space}
\end{align}
They are again inspired by the stochastic game theory, the main differences with respect to Eq.~\eqref{eq:G-L.continuous.space} being that the space dependence is here discrete and that interactions among candidate and field vehicles are localized within a single space cell. Nevertheless, since the latter is not reduced to a point, interactions are ultimately still nonlocal in space, the characteristic cell length $\ell$ playing morally the role of the interaction length $\xi$. The coefficient $\frac{\ell}{2}$ is due to normalization purposes.

Starting from the kinetic traffic flow models just outlined, in the next two sections we investigate how and to what extent information about fundamental relationships can be extracted from them. In particular, in Section~\ref{sect:space.homog} we discuss the proper formalization of the so-called \emph{spatially homogeneous problem} leading to fundamental relationships, showing that, in the abstract, it is actually common to all models presented so far. Furthermore, we specialize it for an exploratory choice of the table of games, among the simplest conceivable ones. Next, in Section~\ref{sect:theory} we perform an asymptotic analysis of the equilibria of the spatially homogeneous problem, specifically aimed at proving the existence and uniqueness of the flux and speed diagrams. In particular, uniqueness for a generic number of speed classes is proposed here for the first time. As a by-product, we also provide explicit (recursive) analytical formulas of such diagrams.

\section{The spatially homogeneous problem}
\label{sect:space.homog}
As anticipated at the beginning of Section~\ref{sect:intro}, when speaking of fundamental diagrams the tacit assumption is that traffic flow is stationary and homogeneous in space. From the mathematical point of view, we can mimic these conditions in two subsequent steps.
\begin{itemize}
\item First, we account for spatial homogeneity by assuming that the kinetic distribution functions are independent of the variable representing the space, either $x$ or the index $i$. Hence we are left with $f_j=f_j(t)$, which stands for the number of cars that at time $t$ travel at speed $v_j$. Under this assumption, both models~\eqref{eq:kinetic.continuous.space}-\eqref{eq:G-L.continuous.space} and~\eqref{eq:kinetic.discrete.space}-\eqref{eq:G-L.discrete.space} reduce to:
\begin{equation}
	\frac{df_j}{dt}=\sum_{h,\,k=1}^{n}\eta_{hk}A_{hk}^{j}f_h f_k-f_j\sum_{k=1}^{n}\eta_{jk}f_k,
	\label{eq:spat.homog}
\end{equation}
up to incorporating into the interaction rate the coefficient $\frac{\ell}{2}$ appearing in Eq.~\eqref{eq:G-L.discrete.space}. In particular, to see why the advection term at the left-hand side of Eq.~\eqref{eq:kinetic.discrete.space} vanishes we notice that also the flux limiters become independent of $i$. In fact, spatially homogeneous conditions imply that the number of vehicles flowing from the incoming cell and the amount of free room in the destination cell are the same at each cell interface.
\item Second, we identify the stationary configurations of traffic flow with the \emph{stable} equilibria (if any) of the dynamical system~\eqref{eq:spat.homog}.
\end{itemize}

Recalling property~\eqref{eq:table.games.sum.1} of the table of games, it is immediate to check formally that Eq.~\eqref{eq:spat.homog} is such that
\begin{equation*}
	\frac{d}{dt}\sum_{j=1}^{n}f_j=0.
\end{equation*}
Therefore the (spatially homogeneous) density of cars $\rho=\sum_{j=1}^{n}f_j$ is conserved in time, whereby any possible equilibrium point $\f^\infty=\{f^\infty_j\}_{j=1}^{n}$ of system~\eqref{eq:spat.homog} is characterized by the fact that
\begin{equation*}
	\sum_{j=1}^{n}f_j^\infty=\sum_{j=1}^{n}f_j(0).
\end{equation*}
Consequently, it is possible to understand $\rho$ as a parameter of Eq.~\eqref{eq:spat.homog} (fixed by the initial condition) and to select the admissible equilibrium fluxes and mean speeds corresponding to a given density by looking at the large time behavior of the solutions to Eq.~\eqref{eq:spat.homog}. This is particularly useful when flux-density and speed-density relationships at equilibrium have to be computed numerically, for good numerical schemes tend naturally to converge, for large times, to (approximations of) stable equilibria of system \eqref{eq:spat.homog}. Conversely, finding automatically admissible equilibria by solving the nonlinear algebraic system resulting from equating to zero the right-hand side of Eq.~\eqref{eq:spat.homog} can be harder, because additional criteria are needed to distinguish stable from unstable solutions.

The mappings
\begin{equation*}
	\rho\mapsto\sum_{j=1}^{n}v_jf^\infty_j, \qquad \rho\mapsto\frac{1}{\rho}\sum_{j=1}^{n}v_jf^\infty_j,
\end{equation*}
cf. also Eq.~\eqref{eq:q.u}, define analytically the fundamental and speed diagrams, respectively. In particular, if for any given $\rho\in[0,\,\rhomax]$ system~\eqref{eq:G-L.continuous.space} admits a unique stable equilibrium then these mappings are actual functions; otherwise, they define \emph{multivalued diagrams}, which have also been studied in the literature, cf. e.g.,~\cite{gunther2003SIAP}.

\subsection{A prototypical case study}
\label{sect:case.study}
In order to substantiate more the issues set forth above, we now specialize Eq.~\eqref{eq:spat.homog} by exemplifying a structure of the interaction rate and of the table of games. We point out that our choices will not be, generally speaking, the most refined possible ones from the modeling point of view. Here we rather aim at examining a sufficiently handy prototypical model, yet meaningful for the application to vehicular traffic, which allows us to state precisely some results interesting for the development of the discrete kinetic theory of traffic flow. As a matter of fact, the interaction rate and the table of games that we will propose are particular instances of those introduced in~\cite{fermo2013SIAP}. Interested readers are also referred to~\cite{bellouquid2012M3AS,delitala2007M3AS} for further examples.

As a legacy of the classical collisional gas-kinetic theory, the interaction rate $\eta_{hk}$ may depend on the speeds of the interacting particles (we recall that e.g., in the discrete Boltzmann equation one has $\eta_{hk}\propto\vert v_k-v_h\vert$ but see also~\cite{coscia2007IJNM}). However, considering that cars do not interact through collisions and that more or less frequent interactions are essentially due to the level of traffic congestion on the road, we assume that the frequency of car interactions is actually independent of the pre-interaction speeds and is instead proportional to the car density. Hence we set:
\begin{equation}
	\eta_{hk}\equiv \eta=\eta_0\rho,
	\label{eq:int.rate}
\end{equation}
where $\eta_0>0$ is a proportionality constant which can be hidden in the time scale of Eq.~\eqref{eq:spat.homog}.

For constructing the table of games $A_{hk}^{j}$ we have instead to assess, in probability, which events can induce speed transitions toward new speed classes. To this purpose, we distinguish two cases.
\begin{enumerate}
\item[(i)] If $v_h\leq v_k$, i.e., if the candidate vehicle is slower than or at most as fast as the field vehicle, we assume that the effect of the interaction is that the candidate vehicle either keeps its pre-interaction speed, with a probability increasing with the congestion of the road, or is motivated to accelerate to the next speed class, with a probability increasing with the free room available on the road. Thus we set:
\begin{subequations}
\begin{equation}
\fbox{$v_h\leq v_k$}\
\begin{cases}
	h\ne n & 
	\begin{cases}
		A_{hk}^h=\frac{\rho}{\rhomax} \\
		A_{hk}^{h+1}=1-\frac{\rho}{\rhomax} \\
		A_{hk}^j=0 \quad \text{if\ } j\ne h,\,h+1
	\end{cases} \\
	h=n &
	\begin{cases}
		A_{nn}^n=1 \\
		A_{nn}^j=0 \quad \text{if\ } j\ne n.
	\end{cases}	\
\end{cases}
\label{eq:tab.games.h<k}
\end{equation}
Notice that if the candidate vehicle is already in the highest speed class $v_n$ then it can only keep the pre-interaction speed with unit probability, due to the lack of further higher speed classes.
\item[(ii)] If instead $v_h>v_k$, i.e., if the candidate vehicle is faster than the field vehicle, we assume that after the interaction the candidate vehicle either keeps its speed, with a probability increasing with the free room available on the road (which corresponds, for instance, to the case in which the candidate vehicle can overtake the field vehicle) or decelerates to the speed class of the field vehicle, with a probability increasing with the congestion of the road (which, conversely, corresponds to the case in which the candidate vehicle cannot overtake and is forced to queue). Hence we set:
\begin{equation}
\fbox{$v_h>v_k$}\
\begin{cases}
	A_{hk}^k=\frac{\rho}{\rhomax} \\
	A_{hk}^h=1-\frac{\rho}{\rhomax} \\
	A_{hk}^j=0 \quad \text{if\ } j\ne h,\,k.
\end{cases}
\label{eq:tab.games.h>k}
\end{equation}
\end{subequations}
\end{enumerate}

\begin{figure}[t]
\begin{center}
\includegraphics[width=0.9\textwidth]{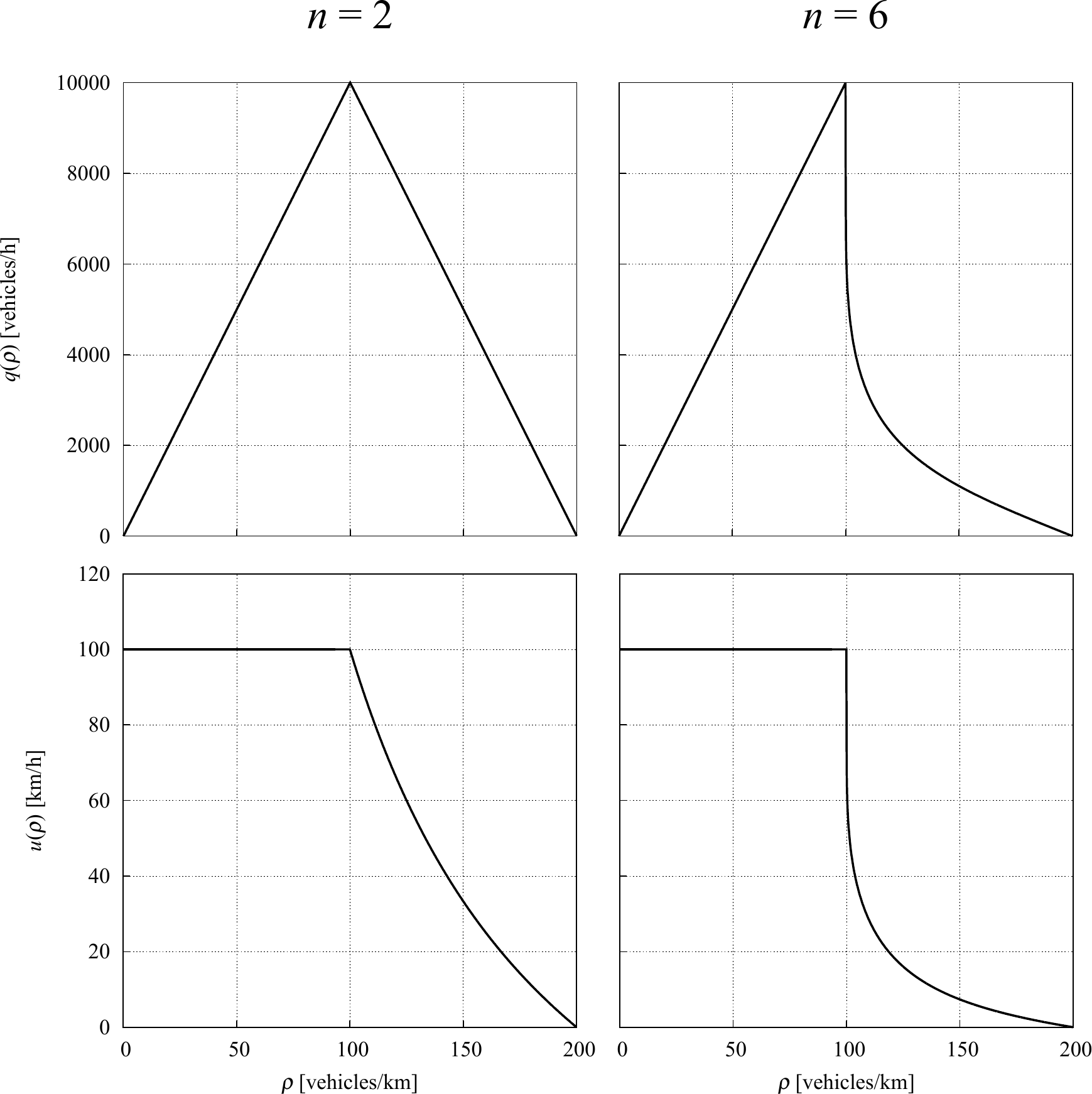}
\end{center}
\caption{Fundamental diagrams (top row) and speed diagrams (bottom row) obtained from model~\eqref{eq:spat.homog}--\eqref{eq:speed.lattice} using $n=2$ (left column) and $n=6$ (right column) speed classes.}
\label{fig:funddiag}
\end{figure}

Typical orders of magnitude of the maximum car density and speed in highway-type roads are $\rhomax=200\ \unit{vehicles/km}$ and $\vmax=100\ \unit{km/h}$. Using a uniformly spaced speed lattice in the interval $[0,\,\vmax]$ consisting of $n\geq 2$ speed classes:
\begin{equation}
	v_j=\frac{j-1}{n-1}\vmax, \quad j=1,\,\dots,\,n
	\label{eq:speed.lattice}
\end{equation}
we obtain from model~\eqref{eq:spat.homog}--\eqref{eq:speed.lattice} the fundamental and speed diagrams depicted in Fig.~\ref{fig:funddiag} for $n=2$ and $n=6$ speed classes.

We notice that the model catches the two main phases of traffic as theoretically described in the literature: a first \emph{free phase} at low density, in which the mean speed is constant at $\vmax$ and the flux is linear, followed by a \emph{congested phase} at high density, in which the mean speed and flux decrease monotonically, and possibly nonlinearly, to zero. In particular, with two speed classes only, namely $v_1=0$ and $v_2=\vmax$, the resulting fundamental diagram is the triangular one introduced by Daganzo~\cite{daganzo2005TRB} and Newell~\cite{newell1993TRB} as a simplification of more elaborated diagrams for both theoretical and numerical purposes, see also~\cite{blandin2009AMC}.  Moreover, it is interesting to observe that the speed diagram is in perfect agreement with that obtained numerically in~\cite{degond2008KRM} by simulating a \emph{microscopic} ``follow-the-leader'' model. This confirms practically that the kinetic approach successfully retains the microscopic character of car-to-car interactions, although the representation of the system is not focused on single vehicles. Increasing the number of speed classes produces a stronger nonlinear trend of both fundamental relationships in the congested phase, yet the critical density $\sigma$ at which the phase transition occurs is always $\sigma=\frac{\rhomax}{2}=100\ \unit{vehicles/km}$ (we will prove this property, among other ones, in the next Section~\ref{sect:theory}). This is ultimately due to the fact that our prototypical table of games~\eqref{eq:tab.games.h<k}-\eqref{eq:tab.games.h>k} lacks a parameter related to the environment (such as e.g., the quality of the road, weather conditions), which can account for different driver behaviors in different roads. For more precise models detailing also this aspect we refer interested readers to~\cite{bellouquid2012M3AS,fermo2013SIAP}. However, it is worth stressing that even such a simple model is able to predict the macroscopic phase transition as a consequence of more elementary lower scale car-to-car dynamics. Particularly, it does not require an \emph{ad hoc} construction in which such a transition is given as an external input to the model itself. To date, this seems to be impossible in a purely macroscopic approach, where the phase transition has to be postulated heuristically, cf. e.g.,~\cite{blandin2011SIAP,colombo2002SIAP}.

\section{Asymptotic analysis and equilibria}
\label{sect:theory}
In this section we study the qualitative properties of system~\eqref{eq:spat.homog}, particularly its asymptotic trends and equilibria which, as stated in Section~\ref{sect:space.homog}, are at the basis of the characterization of fundamental relationships. For the sake of simplicity, we consider the dimensionless version of the model in which the $f_j$'s and $\rho$ are scaled with respect to $\rhomax$, so that the maximum dimensionless density is $1$, and likewise the $v_j$'s are scaled with respect to $\vmax$, so that the maximum dimensionless speed is $v_n=1$. In practice, we use Eqs.~\eqref{eq:spat.homog}--\eqref{eq:speed.lattice} with $\rhomax=\vmax=1$.

To begin with, we notice that, owing to the assumption that the interaction rate is independent of the speeds of the interacting vehicles, cf. Eq.~\eqref{eq:int.rate}, the spatially homogeneous equations~\eqref{eq:spat.homog} can be rewritten as:
\begin{equation}
	\frac{df_j}{dt}=\eta[\rho]\left(\sum_{h,\,k=1}^{n}A_{hk}^{j}[\rho]f_hf_k-\rho f_j\right),
	\qquad j=1,\,\dots,\,n,
	\label{eq:simpl.spat.homog}
\end{equation}
where we have further stressed that both $\eta$ and the $A_{hk}^{j}$'s depend on $\rho$. By prescribing an initial condition $\f^0=(f^0_1,\,\dots,\,f^0_n)\in\R^n$ we obtain a Cauchy problem, which well-posedness has been established in~\cite{delitala2007M3AS}. We report here the statement of the result for completeness.
\begin{theorem}
Let $\f^0$ be such that 
\begin{equation*}
	f^0_j\geq 0 \quad \forall\,j=1,\,\dots,\,n, \qquad \sum_{j=1}^{n}f^0_j=\rho\in [0,\,1].
\end{equation*}
There exists a unique global solution $\f=(f_1,\,\dots,\,f_n)\in C^1((0,\,+\infty);\,\R^n)$ to Eq.~\eqref{eq:simpl.spat.homog} such that $\f(0)=\f^0$ and:
\begin{equation*}
	f_j(t)\geq 0 \quad \forall\,j=1,\,\dots,\,n, \qquad \sum_{j=1}^{n}f_j(t)=\rho
\end{equation*}
for all $t>0$.

Moreover, given two initial data $\f^0$, $\g^0$ with the same density $\rho$ the following \emph{a priori} continuity estimate holds true:
\begin{equation*}
	\norm{1}{\g(t)-\f(t)}+\norm{1}{\g'(t)-\f'(t)}\leq (1+3\eta_0)e^{3\eta_0t}\norm{1}{\g^0-\f^0},
		\qquad \forall\,t>0,
\end{equation*}
where $\eta_0:=\sup_{\rho\in[0,\,1]}\eta[\rho]$ and $\norm{1}{\cdot}$ is the $1$-norm in $\R^n$. In particular, the above estimate is uniform in time in every interval $[0,\,T]$, $T<+\infty$.
\label{theo:wellpos}
\end{theorem}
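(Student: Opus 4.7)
The plan is to build the proof in the standard four-stage sequence: local existence, invariance of the admissible set, globalization, and continuous dependence. First I would observe that the right-hand side of~\eqref{eq:simpl.spat.homog}, viewed as a map $\R^n\to\R^n$, is a polynomial of degree two in $\f$ (recall that $\rho=\sum_j f_j$, so $\eta[\rho]$ and the $A_{hk}^j[\rho]$ are smooth polynomial expressions in the components of $\f$). Hence it is locally Lipschitz and the Picard--Lindel\"of theorem gives a unique $C^1$ solution on a maximal interval $[0,\,T^\ast)$ with $\f(0)=\f^0$.

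The second step is to show that the affine slice $\{\f\in\R^n_{\geq 0}:\sum_j f_j=\rho\}$ is invariant. Conservation of the sum is immediate: summing~\eqref{eq:simpl.spat.homog} over $j$ and using~\eqref{eq:table.games.sum.1} gives $\frac{d}{dt}\sum_j f_j=\eta[\rho](\sum_{h,k}f_hf_k-\rho\sum_jf_j)=\eta[\rho](\rho^2-\rho^2)=0$, so $\sum_j f_j(t)\equiv\rho$ along the flow. For nonnegativity I would use the tangent-cone (quasi-monotonicity) argument: at any time $t_0$ at which $f_j(t_0)=0$ the right-hand side reduces to $\eta[\rho]\sum_{h,k}A_{hk}^{j}[\rho]f_h(t_0)f_k(t_0)\geq 0$, since all factors are nonnegative and the $-\rho f_j$ term vanishes. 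Thus the nonnegative cone is positively invariant, and in particular each $f_j(t)\in[0,\rho]\subset[0,1]$. This \emph{a priori} bound rules out finite-time blow-up, so $T^\ast=+\infty$ and the solution is global.

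For the continuity estimate I would subtract the two equations. Since $\g^0$ and $\f^0$ have the same density $\rho$, by the previous step so do $\g(t)$ and $\f(t)$ for all $t$, which means that $\eta[\rho]$ and the $A_{hk}^j[\rho]$ are identical in the two equations. Writing $g_hg_k-f_hf_k=g_h(g_k-f_k)+f_k(g_h-f_h)$ and summing $|g'_j(t)-f'_j(t)|$ over $j$, I would use~\eqref{eq:table.games.sum.1} to drop the index $j$ from the double sum, then bound
\begin{equation*}
\sum_{h,k}|g_hg_k-f_hf_k|\leq\sum_{h,k}g_h|g_k-f_k|+\sum_{h,k}f_k|g_h-f_h|=2\rho\,\norm{1}{\g-\f},
\end{equation*}
while the linear term contributes $\rho\,\norm{1}{\g-\f}$. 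Since $\rho\leq 1$, this yields the pointwise bound $\norm{1}{\g'(t)-\f'(t)}\leq 3\eta_0\,\norm{1}{\g(t)-\f(t)}$. Integrating in time and applying Gr\"onwall's lemma produces $\norm{1}{\g(t)-\f(t)}\leq e^{3\eta_0 t}\norm{1}{\g^0-\f^0}$, and feeding this back into the derivative bound and adding the two inequalities gives exactly the claimed estimate, uniform on every finite time interval.

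The only step with a genuine risk of going wrong is the nonnegativity argument, because~\eqref{eq:simpl.spat.homog} has an explicit negative term $-\eta[\rho]\rho f_j$; the clean way around it is precisely the boundary-tangency check above, which works because that term drops out exactly when $f_j=0$. Everything else is structural: the conservation law follows from the probabilistic normalization~\eqref{eq:table.games.sum.1}, and the continuity estimate is a Gr\"onwall argument made possible by the fact that $\rho$ is a first integral shared by $\f$ and $\g$, so no error terms from differing coefficients $\eta[\rho]$, $A_{hk}^j[\rho]$ appear.
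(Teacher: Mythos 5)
The paper does not actually prove this theorem: it is quoted verbatim from the reference \cite{delitala2007M3AS} ``for completeness,'' so there is no in-paper argument to compare yours against. That said, your proof is correct and is exactly the standard route one would expect: Picard--Lindel\"of for local existence, conservation of $\sum_j f_j$ via the normalization~\eqref{eq:table.games.sum.1}, positive invariance of the nonnegative orthant by the boundary-tangency (quasipositivity) check, the resulting \emph{a priori} bound $f_j\in[0,\rho]$ to globalize, and a Gr\"onwall argument for the continuous-dependence estimate. Your bookkeeping in the last step reproduces the stated constants precisely: the bilinear term contributes $2\rho\,\eta[\rho]$, the linear term $\rho\,\eta[\rho]$, and $\rho\leq 1$, $\eta[\rho]\leq\eta_0$ give the factor $3\eta_0$, whence $(1+3\eta_0)e^{3\eta_0 t}$ after adding the derivative bound.

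One small point worth tightening. You justify local Lipschitz continuity by saying the right-hand side is ``a polynomial of degree two in $\f$'' because $\eta[\rho]$ and $A_{hk}^j[\rho]$ are ``polynomial expressions in the components of $\f$.'' That is true for the specific table of games~\eqref{eq:tab.games.h<k}--\eqref{eq:tab.games.h>k}, but the theorem as stated is general: the paper emphasizes that only boundedness of $\eta[\rho]$ and the normalization of the $A_{hk}^j[\rho]$ are assumed, with no regularity in $\rho$. The clean fix is to read $\rho$ in~\eqref{eq:simpl.spat.homog} as a \emph{fixed parameter} determined by the initial datum (which is how the paper uses it), so that $\eta[\rho]$ and $A_{hk}^j[\rho]$ are constants and the right-hand side is genuinely quadratic with constant coefficients. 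Under that reading your conservation computation should be phrased as: setting $s(t)=\sum_j f_j(t)$, summation over $j$ gives the scalar ODE $s'=\eta[\rho](s^2-\rho s)$ with $s(0)=\rho$, whose unique solution is $s\equiv\rho$; writing $\sum_{h,k}f_hf_k=\rho^2$ directly presupposes what you are proving. This is a one-line repair, not a gap in the substance of the argument.
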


Since the solution to Eq.~\eqref{eq:simpl.spat.homog} is global in time, it makes sense to investigate its asymptotic behavior for $t\to +\infty$. A first result is proved again in~\cite{delitala2007M3AS} and concerns the existence of equilibria:
\begin{theorem}
For every $\rho\in [0,\,1]$ there exists at least one equilibrium point $\f^\infty=(f^\infty_1,\,\dots,\,f^\infty_n)\in\R^n$ of Eq.~\eqref{eq:simpl.spat.homog} such that:
\begin{equation}
	f^\infty_j\geq 0 \quad \forall\,j=1,\,\dots,\,n, \qquad \sum_{j=1}^{n}f^\infty_j=\rho.
	\label{eq:good.equilibrium}
\end{equation}
\label{theo:existence.equilibria}
\end{theorem}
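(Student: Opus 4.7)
The plan is to obtain the equilibrium as a fixed point of a suitable self-map of the scaled simplex
\[
	S_\rho := \Bigl\{\mathbf{f}\in\R^n : f_j\geq 0\ \forall j,\ \sum_{j=1}^n f_j=\rho\Bigr\},
\]
which is precisely the set described by~\eqref{eq:good.equilibrium}. This set is nonempty, compact and convex in $\R^n$, so Brouwer's fixed point theorem applies to any continuous self-map.

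The case $\rho=0$ is trivial, since $S_0=\{\mathbf{0}\}$ and $\mathbf{f}=\mathbf{0}$ annihilates the right-hand side of~\eqref{eq:simpl.spat.homog}. For $\rho\in(0,1]$ I would observe that, because $\eta[\rho]=\eta_0\rho>0$, the equilibrium condition reduces to the algebraic system
\[
	\rho\,f_j=\sum_{h,\,k=1}^n A_{hk}^{j}[\rho]\,f_h f_k,\qquad j=1,\,\dots,\,n.
\]
This suggests introducing the map $T\colon S_\rho\to\R^n$ defined componentwise by
\[
	T(\mathbf{f})_j=\frac{1}{\rho}\sum_{h,\,k=1}^n A_{hk}^{j}[\rho]\,f_h f_k,
\]
whose fixed points are exactly the sought equilibria in $S_\rho$.

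Next I would check that $T$ indeed maps $S_\rho$ into itself. Nonnegativity of $T(\mathbf{f})_j$ is immediate from $A_{hk}^{j}\geq 0$ and $f_h,f_k\geq 0$. For the mass constraint, I would use property~\eqref{eq:table.games.sum.1}, namely $\sum_{j=1}^n A_{hk}^{j}=1$ for all $h,k$, to compute
\[
	\sum_{j=1}^n T(\mathbf{f})_j=\frac{1}{\rho}\sum_{h,\,k=1}^n f_h f_k\underbrace{\sum_{j=1}^n A_{hk}^{j}}_{=1}=\frac{1}{\rho}\Bigl(\sum_{h=1}^n f_h\Bigr)^{\!2}=\frac{\rho^2}{\rho}=\rho,
\]
so $T(\mathbf{f})\in S_\rho$. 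Continuity of $T$ is clear because it is polynomial in $\mathbf{f}$ (the coefficients $A_{hk}^{j}[\rho]$ and the factor $1/\rho$ are fixed once $\rho$ is fixed). Applying Brouwer's theorem to $T\colon S_\rho\to S_\rho$ yields a fixed point $\mathbf{f}^\infty$, which by construction satisfies both~\eqref{eq:good.equilibrium} and the equilibrium equations for~\eqref{eq:simpl.spat.homog}.

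I do not expect any serious obstacle here: the main points are the mild algebraic manipulation of the normalization identity~\eqref{eq:table.games.sum.1}, which is what makes $T$ preserve the simplex, and the separate treatment of $\rho=0$ so that the division by $\rho$ in the definition of $T$ is legitimate. Note that this abstract existence argument does not rely on the specific form~\eqref{eq:tab.games.h<k}--\eqref{eq:tab.games.h>k} of the table of games, so uniqueness and stability—which will require exploiting that structure—are left as the genuinely difficult points addressed later.
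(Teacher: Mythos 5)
Your proof is correct and complete: the simplex $S_\rho$ is compact and convex, your map $T$ is continuous and preserves it thanks to the normalization~\eqref{eq:table.games.sum.1}, and Brouwer's theorem then delivers a fixed point, which for $\rho>0$ (where $\eta[\rho]=\eta_0\rho>0$, so the factor $\eta[\rho]$ can be divided out) is exactly an equilibrium satisfying~\eqref{eq:good.equilibrium}; the case $\rho=0$ is handled separately, as it must be. The paper itself does not reproduce a proof but defers to the cited reference, whose argument is the same Brouwer fixed-point idea on the mass-$\rho$ simplex, so your approach matches the intended one.
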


Theorems~\ref{theo:wellpos} and~\ref{theo:existence.equilibria} are quite general, in that they do not assume any specific expression of $\eta[\rho]$ and $A_{hk}^{j}[\rho]$. As a matter of fact, the only really important properties (besides non-negativity, obvious for modeling purposes) are that $\eta[\rho]$ be bounded and the $A_{hk}^{j}[\rho]$'s satisfy Eq.~\eqref{eq:table.games.sum.1} for every $\rho\in[0,\,1]$.

A detailed expression of the table of games is instead necessary in order to characterize the uniqueness and stability of equilibria. Some studies on this issue have been performed in~\cite{delitala2007M3AS}, however they have been limited to a restricted number of speed classes, specifically $n=2,\,3$, while a result for generic $n\geq 2$ is still missing. The next theorem, proposed here for the first time, contributes to filling such a gap by taking advantage of the table~\eqref{eq:tab.games.h<k}-\eqref{eq:tab.games.h>k}.
\begin{theorem}
Let the transition probabilities be as in Eqs.~\eqref{eq:tab.games.h<k},~\eqref{eq:tab.games.h>k}. For every $n\geq 2$ and every $\rho\in [0,\,1]$ there exists a unique equilibrium point $\f^\infty$ of Eq.~\eqref{eq:simpl.spat.homog} satisfying~\eqref{eq:good.equilibrium} and which is also stable and attractive.
\label{theo:uniqueness.equilibria}
\end{theorem}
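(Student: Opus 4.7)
The plan is to exploit a \emph{triangular (cascading) structure} induced on system~\eqref{eq:simpl.spat.homog} by the table of games~\eqref{eq:tab.games.h<k}--\eqref{eq:tab.games.h>k}. Inspection of which coefficients $A_{hk}^{j}$ are nonzero shows that, for every $j$, case (i) contributes only pairs with $h\in\{j-1,j\}$ and case (ii) only pairs with $h=j,\,k<j$ or $k=j,\,h>j$; in each instance the field-vehicle sums collapse through $\sum_{k\ge j}f_k=\rho-\sum_{k<j}f_k$, so $G_j$ (and hence $df_j/dt$) depends only on $f_1,\ldots,f_j$. In particular the first $n-1$ equations form a closed sub-system in $(f_1,\ldots,f_{n-1})$, while $f_n$ is recovered automatically from the conservation law $\sum_{j=1}^{n}f_j=\rho$ enforced by~\eqref{eq:table.games.sum.1}.

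I would analyze this sub-system inductively on $j$. For $j=1$ a direct computation, using the normalization $p=\rho$, reduces~\eqref{eq:simpl.spat.homog} to the autonomous logistic ODE
\[
\frac{df_1}{dt}=\eta[\rho]\,\rho\,f_1\,(2\rho-1-f_1),
\]
whose phase-line analysis on $[0,\rho]$ pins down a unique stable, globally attractive equilibrium $f_1^\infty:=\max\{0,\,2\rho-1\}$. For the inductive step, assume $f_i(t)\to f_i^\infty$ for every $i<j$ and substitute these limits into the $j$-th equation, obtaining a limit scalar ODE
\[
\frac{df_j}{dt}=\eta[\rho]\bigl(-\rho\,f_j^2+\alpha_j f_j+\beta_j\bigr),
\]
with explicit $\alpha_j,\beta_j$ depending only on $(f_1^\infty,\ldots,f_{j-1}^\infty)$ and, crucially, $\beta_j\ge 0$. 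The product of the two roots of $\rho f_j^2-\alpha_j f_j-\beta_j=0$ is $-\beta_j/\rho\le 0$, so exactly one root is nonnegative; by concavity of the right-hand side in $f_j$, this root is the unique stable, globally attractive equilibrium of the limit ODE on $[0,+\infty)$. The Markus--Thieme theory of asymptotically autonomous scalar ODEs then yields $f_j(t)\to f_j^\infty$.

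Setting $f_n^\infty:=\rho-\sum_{j<n}f_j^\infty$ produces a nonnegative value (Theorem~\ref{theo:wellpos} gives $f_n(t)\ge 0$ for all $t$) that automatically solves the redundant $n$-th equilibrium equation. Non-negativity and the sum in~\eqref{eq:good.equilibrium} thus hold for $\f^\infty$, and uniqueness of a nonnegative equilibrium follows from the uniqueness of the nonnegative root at every level of the cascade: any alternative equilibrium would have to place some $f_j$ on the nonpositive root, violating~\eqref{eq:good.equilibrium}. Stability is checked via the Jacobian of the closed $(n-1)$-dimensional sub-system at $\f^\infty$: by the triangular structure it is lower triangular, with diagonal entries
\[
\frac{\partial F_j}{\partial f_j}\bigg|_{\f^\infty}=-\eta[\rho]\sqrt{\alpha_j^2+4\rho\beta_j}\le 0,
\]
strictly negative away from the critical density. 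Hyperbolicity then yields local asymptotic stability, while the cascade argument itself upgrades this to global attractivity inside the simplex $\{\f\ge 0,\,\sum_jf_j=\rho\}$.

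The main obstacle I anticipate is the \emph{critical density} $\rho=1/2$: there $f_1(t)\to 0$ only algebraically ($\sim 1/t$) and at every subsequent level $\alpha_j=\beta_j=0$, so the limit ODE degenerates to $df_j/dt=-\eta[\rho]\rho f_j^2$ with a non-hyperbolic zero equilibrium. The Markus--Thieme conclusion still applies (it only requires a globally attractive equilibrium on the invariant half-line, not hyperbolicity), but the decay rates in the cascade must be tracked quantitatively, since each $f_j$ inherits algebraic rather than exponential decay and the perturbative driving terms must be dominated against this slower rate. The structural sign condition $\beta_j\ge 0$ at every level is what keeps the cascade well-defined and ultimately underlies the whole argument.
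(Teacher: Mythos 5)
Your proposal is correct and follows the same core strategy as the paper: an induction on the speed classes in which, thanks to the conservation law $\sum_k f_k=\rho$, the equilibrium condition at level $j$ reduces to a concave quadratic in $f^\infty_j$ whose unique admissible (nonnegative, stable) root is the larger one --- your $\alpha_j$, $\beta_j$ coincide exactly with the coefficients in the paper's Eq.~\eqref{eq:equation.fj.equil}. Where you genuinely add something is on the dynamical side: the paper justifies stability and attractivity at each level by a one-line appeal to the scalar phase-line picture, which for $j\geq 2$ is only heuristic since $f_1,\dots,f_{j-1}$ are still evolving; your explicit observation that the system is triangular (each $df_j/dt$ depends only on $f_1,\dots,f_j$), the appeal to the theory of asymptotically autonomous scalar equations, and the lower-triangular Jacobian with diagonal entries $-\eta[\rho]\sqrt{\Delta}$ supply the rigor the paper leaves implicit. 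You also correctly isolate the degenerate case: for $\rho\leq\frac{1}{2}$ one has $\beta_j=0$, so the paper's assertions that the discriminant~\eqref{eq:discriminant} is \emph{positive} and that the product of the two roots is \emph{negative} (hence the larger root positive) fail literally --- at $\rho=\frac{1}{2}$ the quadratic has a double root at the origin --- whereas your formulation via $\beta_j\geq 0$ and selection of the unique nonnegative root is the statement that actually survives. The one step you should still write out in full is the quantitative control of the algebraic (rather than exponential) decay along the cascade at $\rho=\frac{1}{2}$, which you yourself flag; everything else is complete.
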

\begin{proof}
Let $\f^\infty$ be any of the equilibria that Theorem~\ref{theo:existence.equilibria} speaks of. If $\rho=0$ then the unique possibility is trivially $f^\infty_j=0$ for all $j=1,\,\dots,\,n$; therefore we will henceforth focus on the case $0<\rho\leq 1$.

First, we observe that it is sufficient to prove the uniqueness of the first $n-1$ components of $\f^\infty$, for then the $n$-th one is uniquely determined by the conservation of mass:
\begin{equation*}
	f^\infty_n=\rho-\sum_{j=1}^{n-1}f^\infty_j.
\end{equation*}

Equilibria of system~\eqref{eq:simpl.spat.homog} solve the equation:
\begin{equation}
	\sum_{h,\,k=1}^{n}A_{hk}^{j}[\rho]f^\infty_h f^\infty_k-\rho f^\infty_j=0,
	\label{eq:equation.equilibria.compact}
\end{equation}
that, owing to the argument above, we only need to consider for $j=1,\,\dots,\,n-1$. In order to follow the two cases $h\leq k$ and $h>k$ which define the table of games (cf. Eqs.~\eqref{eq:tab.games.h<k} and~\eqref{eq:tab.games.h>k}, respectively), we split Eq.~\eqref{eq:equation.equilibria.compact} as:
\begin{equation}
	\underbrace{\sum_{h=1}^{n}\sum_{k=h}^{n}A_{hk}^{j}[\rho]f^\infty_h f^\infty_k}_{h\leq k}+
		\underbrace{\sum_{h=2}^{n}\sum_{k=1}^{h-1}A_{hk}^{j}[\rho]f^\infty_h f^\infty_k}_{h>k}
			-\rho f^\infty_j=0
	\label{eq:equation.equilibria.split}
\end{equation}
and then we prove the uniqueness of a stable and attractive $\f^\infty$ by proceeding by induction on $j$.

\subsubsection*{Basis: $f_1^\infty$ is unique.}
If $j=1$ the only nonzero coefficients of the table of games are:
\begin{enumerate}[i)]
\item for $h\leq k$, $A_{1k}^1[\rho]=\rho$ ($k=1,\,\dots,\,n$);
\item for $h>k$, $A_{h1}^{1}[\rho]=\rho$ ($h=2,\,\dots,\,n$).
\end{enumerate}
Hence Eq.~\eqref{eq:equation.equilibria.split} specializes as:
\begin{equation*}
	\rho\left(\sum_{k=1}^{n}f^\infty_k+\sum_{h=2}^{n}f^\infty_h-1\right)f^\infty_1=0,
\end{equation*}
which, using $\sum_{k=1}^{n}f^\infty_k=\rho$ and consequently $\sum_{h=2}^{n}f^\infty_h=\rho-f^\infty_1$, gives
\begin{equation*}
	\rho(2\rho-1-f^\infty_1)f^\infty_1=0.
\end{equation*}
The two solutions are ${(f^\infty_1)}_{I}=0$ and ${(f^\infty_1)}_{II}=2\rho-1$. In order to study their stability, we notice that the calculations just made also imply that the right-hand side of Eq.~\eqref{eq:simpl.spat.homog} for $j=1$ is $\eta[\rho](-\rho f_1^2+\rho(2\rho-1)f_1)$, namely a second degree polynomial in $f_1$ with negative leading coefficient, whose roots are precisely the two possible values of $f^\infty_1$ found above. It is a basic fact of stability theory that, in such a case, the only stable equilibrium, which is also attractive, coincides with the larger root, whereas the other one is unstable. Hence the first component of $\f^\infty$ is univocally determined by choosing:
\begin{equation}
	f^\infty_1=
	\begin{cases}
		0 & \textup{if\ } \rho\leq\frac{1}{2} \\
		2\rho-1 & \textup{if\ } \rho>\frac{1}{2}.
	\end{cases}
	\label{eq:f1.equil}
\end{equation}

\subsubsection*{Inductive step: if $f^\infty_1,\,\dots,\,f^\infty_{j-1}$ are unique then so is $f^\infty_j$.}
For $2\leq j\leq n-1$ the only nonzero coefficients of the table of games are:
\begin{enumerate}[i)]
\item for $h\leq k$, $A_{jk}^{j}[\rho]=\rho$ ($k=j,\,\dots,\,n$) and $A_{j-1,k}^{j}[\rho]=1-\rho$ ($k=j-1,\,\dots,\,n$);
\item for $h>k$, $A_{hj}^{j}[\rho]=\rho$ ($h=j+1,\,\dots,\,n$) and $A_{jk}^{j}[\rho]=1-\rho$ ($k=1,\,\dots,\,j-1$),
\end{enumerate}
thus Eq.~\eqref{eq:equation.equilibria.split} specializes as:
\begin{multline*}
	\underbrace{\rho f^\infty_j\sum_{k=j}^{n}f_k+(1-\rho)f^\infty_{j-1}\sum_{k=j-1}^{n}f^\infty_k}_{h\leq k}+
		\underbrace{\rho f^\infty_j\sum_{h=j+1}^{n}f^\infty_h+(1-\rho)f^\infty_j\sum_{k=1}^{j-1}f^\infty_k}_{h>k} \\
			-\rho f^\infty_j=0.
\end{multline*}
Using the conservation of mass as before for replacing conveniently the sums up to $n$ with sums up to at most $j$ and rearranging the terms we finally discover:
\begin{equation}
	-\rho{(f^\infty_j)}^{2}
		+\left[(1-3\rho)\sum_{k=1}^{j-1}f^\infty_k+\rho(2\rho-1)\right]f^\infty_j
			+(1-\rho)f^\infty_{j-1}\left(\rho-\sum_{k=1}^{j-2}f^\infty_k\right)=0,
	\label{eq:equation.fj.equil}
\end{equation}
which, since by the inductive hypothesis $f^\infty_1,\,\dots,\,f^\infty_{j-1}$ are known, is a second degree polynomial equation in the unknown $f^\infty_j$. The discriminant
\begin{equation}
	\Delta={\left[(1-3\rho)\sum_{k=1}^{j-1}f^\infty_k+\rho(2\rho-1)\right]}^2
		+4\rho(1-\rho)f^\infty_{j-1}\left(\rho-\sum_{k=1}^{j-2}f^\infty_k\right)
	\label{eq:discriminant}
\end{equation}
being positive for all $0<\rho\leq 1$, there are two real roots whose only the larger one is a stable and attractive equilibrium, because the leading coefficient of the polynomial is negative. In addition, by inspecting the constant term after rewriting the polynomial in monic form (i.e., collecting $-\rho$ at the left-hand side) we infer that the product of the two roots is negative, hence the larger one is necessarily positive. Finally, we conclude that the admissible (viz. stable and nonnegative) $f^\infty_j$ is unique.
\end{proof}

From the proof of Theorem~\ref{theo:uniqueness.equilibria} we can deduce the following recursive formula for the components of the equilibrium $\f^\infty$:
\begin{equation*}
	f^\infty_j=
	\begin{cases}
		\begin{cases}
			0 & \textup{if\ } \rho\leq\frac{1}{2} \\
			2\rho-1 & \textup{if\ } \rho>\frac{1}{2}
		\end{cases}
	 		& \textup{if\ } j=1 \\[6mm] 
		\dfrac{-\left[(1-3\rho)\displaystyle{\sum_{k=1}^{j-1}}f^\infty_k+\rho(2\rho-1)\right]+\sqrt{\Delta}}{2\rho}
			& \textup{if\ } 2\leq j\leq n-1 \\ 
		\rho-\displaystyle{\sum_{k=1}^{n-1}}f^\infty_k
			& \textup{if\ } j=n,
	\end{cases}
\end{equation*}
where $\Delta$ is given by Eq.~\eqref{eq:discriminant}. Notice that for $2\leq j\leq n-1$ we have chosen the solution to Eq.~\eqref{eq:equation.fj.equil} with the positive sign in front of the square root of $\Delta$ because it is certainly the larger one.

If $\rho\leq\frac{1}{2}$ then it results $f^\infty_j=0$ for all $j=1,\,\dots,\,n-1$ and $f^\infty_n=\rho$, which implies that the flux and mean speed at equilibrium are, respectively, 
\begin{equation*}
	q(\rho)=v_nf^\infty_n=\rho, \qquad u(\rho)=\frac{1}{\rho}v_nf^\infty_n=1
\end{equation*}
regardless of the number $n$ of speed classes. This explains the same trend observed in Fig.~\ref{fig:funddiag} in the free phase for both $n=2$ and $n=6$. Notice that $u$ is actually indeterminate for $\rho=0$ but can be extended by continuity.

Conversely, if $\rho>\frac{1}{2}$ the trend of $q$ and $u$ depends on $n$, as it is also evident from Fig.~\ref{fig:funddiag} itself. In the case $n=2$ it is not difficult to compute explicitly these trends for all $\rho$. We have indeed:
\begin{equation*}
	f^\infty_1=
	\begin{cases}
		0 & \textup{if\ } \rho\leq\frac{1}{2} \\
		2\rho-1 & \textup{if\ } \rho>\frac{1}{2},
	\end{cases}
	\qquad
	f^\infty_2=
	\begin{cases}
		\rho & \textup{if\ } \rho\leq\frac{1}{2} \\
		1-\rho & \textup{if\ } \rho>\frac{1}{2}
	\end{cases}
\end{equation*}
and $v_1=0$, $v_2=1$, whence
\begin{equation*}
	q(\rho)=f^\infty_2=
	\begin{cases}
		\rho & \textup{if\ } \rho\leq\frac{1}{2} \\
		1-\rho & \textup{if\ } \rho>\frac{1}{2},
	\end{cases}
	\qquad
	u(\rho)=\frac{1}{\rho}f^\infty_2=
	\begin{cases}
		1 & \textup{if\ } \rho\leq\frac{1}{2} \\
		\frac{1}{\rho}-1 & \textup{if\ } \rho>\frac{1}{2},
	\end{cases}
\end{equation*}
which confirms the triangular-shaped flux with critical density $\sigma=\frac{1}{2}$ at road capacity (phase transition).

As anticipated in Section~\ref{sect:case.study}, the proof of Theorem~\ref{theo:uniqueness.equilibria} indicates that the critical density for model~\eqref{eq:spat.homog}--\eqref{eq:speed.lattice} is invariably $\sigma=\frac{1}{2}$, namely $\sigma=\frac{\rhomax}{2}$ in dimensional form, independently of the number of speed classes. In more detail, when $\rho$ crosses such a $\sigma$ a \emph{supercritical bifurcation} occurs: for $\rho<\sigma$ there actually exists only the stable equilibrium ${(f^\infty_1)}_{I}=0$, because ${(f^\infty_1)}_{II}=2\rho-1$ is negative and thus non admissible; for $\rho>\sigma$ the equilibrium ${(f^\infty_1)}_{I}$ still exists but becomes unstable, while the new stable equilibrium ${(f^\infty_1)}_{II}$ appears.

\bibliographystyle{plain}
\bibliography{FlTa-funddiag_traffic}

\end{document}